\newcommand{\exist}{\text{{\sc Existence}}\xspace}
\newcommand{\construct}{\text{{\sc Search}}\xspace}
\newcommand{\optimality}{\text{{\sc Optimality}}\xspace}
\newcommand{\pref}{\mathcal{P}\xspace}
\newcommand{\ASHG}{ASHG\xspace}
\newcommand{\ASHGS}{ASHGs\xspace}
\newcommand\eat[1]{}
\title{Optimal Partitions in Additively Separable Hedonic Games\footnote{A preliminary version of this work was invited for presentation in the session `Cooperative Games and Combinatorial Optimization' at the 24th European Conference on Operational Research (EURO 2010) in Lisbon. This material is based on work supported by the Deutsche Forschungsgemeinschaft under grants BR-2312/6-1 (within the European Science Foundation's EUROCORES program LogICCC) and BR~2312/7-1.}}
\author{Haris Aziz, Felix Brandt, and Hans Georg Seedig}
\begin{document}

\begin{abstract}
We conduct a computational analysis of fair and optimal partitions in additively separable hedonic games.
We show that, for strict preferences, a Pareto optimal partition can be found in polynomial time while verifying whether a given partition is Pareto optimal is coNP-complete, even when preferences are symmetric and strict. Moreover, computing a partition with maximum egalitarian or utilitarian social welfare or one which is both Pareto optimal and individually rational is NP-hard. 
We also prove that checking whether there exists a partition which is both Pareto optimal and envy-free is $\Sigma_{2}^{p}$-complete. Even though an envy-free partition and a Nash stable partition are both guaranteed to exist for symmetric preferences, checking whether there exists a partition which is both envy-free and Nash stable is NP-complete.
\end{abstract}

\section{Introduction}

Ever since the publication of \citeauthor{vNM47a}'s \emph{Theory of Games and Economic Behavior} in 1944, coalitions have played a central role within game theory. The crucial questions in coalitional game theory are
which coalitions can be expected to form and how the members of coalitions should divide the proceeds of their cooperation. Traditionally the focus has been on the latter issue, which led to the formulation and analysis of concepts such as Gillie's core, the Shapley value, or the bargaining set.
Which coalitions are likely to form is commonly assumed to be settled exogenously, either by explicitly specifying the coalition structure, a partition of the players in disjoint coalitions, or, implicitly, by assuming that larger coalitions can invariably guarantee better outcomes to its members than smaller ones and that, as a consequence, the grand coalition of all players will eventually form.

The two questions, however, are clearly interdependent: the individual players' payoffs depend on the coalitions that form just as much as the formation of coalitions depends on how the payoffs are distributed. 

\emph{Coalition formation games}, as introduced by \citet{DrGr80a}, provide a simple but versatile formal model that allows one to focus on coalition formation as such. In many situations it is natural to assume that a player's appreciation of a coalition structure only depends on the coalition he is a member of and not on how the remaining players are grouped.
Initiated by  \citet{BKS01a} and \citet{BoJa02a}, much of the work on coalition formation now concentrates on these so-called \emph{hedonic games}. 

The main focus in hedonic games has been on notions of \emph{stability} for coalition structures such as Nash stability, individual stability, contractual individual stability, or core stability and characterizing conditions under which they are guaranteed to be non-empty \citep[see, \eg][]{BoJa02a}. 
The most prominent examples of hedonic games are two-sided matching games in which only coalitions of size two are admissible~\citep{RoSo90a}.

General coalition formation games have also received attention from the artificial intelligence community, where the focus has generally been on computing partitions that give rise to the greatest social welfare \citep[see, \eg][]{SLA+99a}. 
The computational complexity of hedonic games has been investigated with a focus on the complexity of computing stable partitions for different models of hedonic games~\citep{Ball04a,DBHS06a,Cech08a}. We refer to~\citet{Hajd06a} for a critical overview.

Among hedonic games, \emph{additively separable hedonic games (ASHGs)} are a particularly natural and succinct representation in which each player has a value for every other player and the value of a coalition to a particular player is computed by simply adding his values of the players in his coalition.

Additive separability satisfies a number of desirable axiomatic properties~\citep{BBP04a}. 
\ASHGS are the non-transferable utility generalization of \emph{graph games} studied by \citet{DePa94a}. 
\citet{SuDi10a} showed that for \ASHGS, checking whether a core stable, strict-core stable, Nash stable, or individually stable partition exists is NP-hard. \citet{DBHS06a} obtained positive algorithmic results for subclasses of additively separable hedonic games in which each player divides other players into friends and enemies. \citet{BrLa09a} examined the tradeoff between stability and social welfare in \ASHGS.


\paragraph{Contribution}
In this paper, we analyze concepts from fair division in the context of coalition formation games. 
We present the first systematic examination of the complexity of computing and verifying optimal partitions of hedonic games, specifically \ASHGS. We examine various standard criteria from the social sciences: \emph{Pareto optimality}, \emph{utilitarian social welfare}, \emph{egalitarian social welfare}, and \emph{envy-freeness}~\citep[see, \eg][]{Moul88a}. 

In Section~\ref{sec:sw}, we show that computing a partition with maximum egalitarian social welfare is NP-hard. Similarly, computing a partition with maximum utilitarian social welfare is NP-hard in the strong sense even when preferences are symmetric and strict. 

In Section~\ref{sec:PO}, the complexity of Pareto optimality is studied. 
We prove that checking whether a given partition is Pareto optimal is coNP-complete in the strong sense, even when preferences are strict and symmetric.  By contrast, we present a polynomial-time algorithm for computing a Pareto optimal partition when preferences are strict.\footnote{Thus, we identify a natural problem in coalitional game theory where verifying a possible solution is presumably harder than actually finding one.} Interestingly, computing an individually rational \emph{and} Pareto optimal partition is NP-hard in general.


In Section~\ref{sec:envy}, we consider complexity questions regarding envy-free partitions. Checking whether there exists a partition which is both Pareto optimal and envy-free is shown to be $\Sigma_{2}^{p}$-complete. We present an example which exemplifies the tradeoff between satisfying stability (such as Nash stability) and envy-freeness and use the example to prove that checking whether there exists a partition which is both envy-free and Nash stable is NP-complete even when preferences are symmetric.

Our computational hardness results imply computational hardness of equivalent problems for \emph{hedonic coalition nets} \citep{Elwo09a}.

\section{Preliminaries}

In this section, we provide the terminology and notation required for our results.

\subsection{Hedonic games}

A \emph{hedonic coalition formation game} is a pair $(N,\pref)$ where $N$ is a set of players and $\pref$ is a \emph{preference profile} which specifies for each player $i\in N$ the preference relation $ \succsim_i$, a reflexive, complete and transitive binary relation on set $\mathcal{N}_i=\{S\subseteq N \mid i\in S\}$.

$S\succ_iT$ denotes that $i$ strictly prefers $S$ over $T$ and $S\sim_iT$ that $i$ is indifferent between coalitions $S$ and $T$. A \emph{partition} $\pi$ is a partition of players $N$ into disjoint coalitions. By $\pi(i)$, we denote the coalition in $\pi$ which includes player $i$.

A game $(N,\pref)$ is \emph{separable} if for any player $i\in N$ and any coalition $S\in \mathcal{N}_i$ and for any player $j$ not in $S$ we have the following: $S\cup\{j\}\succ_i S$ if and only if $\{i,j\}\succ_i \{i\}$; $S\cup\{j\}\prec_i S$ if and only if $\{i,j\}\prec_i \{i\}$; and  $S\cup\{j\}\sim_i S$ if and only if $\{i,j\}\sim_i \{i\}$.



In an \emph{additively separable hedonic game} $(N,\pref)$, each player $i\in N$ has value $v_i(j)$ for player $j$ being in the same coalition as $i$ and if $i$ is in coalition $S\in \mathcal{N}_i$, then $i$ gets utility $\sum_{j\in S\setminus \{i\}}v_i(j)$. For coalitions $S,T\in\mathcal{N}_i$, $S \succsim_i T$ if and only if $\sum_{j\in S\setminus \{i\}}v_i(j) \geq \sum_{j\in T\setminus \{i\}}v_i(j)$.


A preference profile is \emph{symmetric} if $v_i(j)=v_j(i)$ for any two players $i,j\in N$ and is \emph{strict} if $v_i(j)\neq 0$ for all $i,j\in N$ such that $i\neq j$. We consider \ASHGS (additively separable hedonic games) in this paper. Unless mentioned otherwise, all our results are for \ASHGS. 

\eat{
\begin{definition}
	A profile $\mathcal{P}$ of additively separable preferences $(v_1,\ldots v_n)$ satisfies 
	\begin{itemize}
		\item \emph{symmetry} if $v_i(j)=v_j(i)$ for all $i,j\in N$. 
		\item \emph{strictness} if $v_i(j)>0$ or $v_i(j)<0$ for all $i,j\in N$.
	\end{itemize}
\end{definition}

For any player $i$, let $F(i)=\{j \mid v_i(j)> 0\}$ be the set of players which $i$ strictly likes. Similarly, let $E(i)=\{j \mid v_i(j)> 0\}$ be the set of players which $i$ strictly dislikes
}

\subsection{Fair and optimal partitions}

In this section, we formulate concepts from the social sciences, especially the literature on fair division, for the context of hedonic games. 
A partition $\pi$ satisfies \emph{individual rationality} if each player does as well as by being alone, i.e., for all $i\in N$, $\pi(i) \succsim_i  \{i\}$.
For a utility-based hedonic game $(N,\pref)$ and partition $\pi$, we will denote the utility of player $i\in N$ by $u_{\pi}(i)$. 
%
The different notions of fair or optimal partitions are defined as follows.\footnote{All welfare notions considered in this paper (utilitarian, elitist, and egalitarian) are based on the interpersonal comparison of utilities.  Whether this assumption can reasonably be made is debatable.}

\begin{enumerate}
	\item The \emph{utilitarian social welfare} of a partition is defined as the sum 
	of individual utilities of the players: $u_{ut}(\pi)=\sum_{i\in N}u_{\pi}(i)$. A \emph{maximum utilitarian partition} maximizes the utilitarian social welfare.
	\item The \emph{elitist social welfare} is given by the utility of the player that is best off: $u_{el}(\pi)=\max \{u_{\pi}(i)\mid i\in N\}$. A \emph{maximum elitist partition} maximizes the utilitarian social welfare.
		\item The \emph{egalitarian social welfare} is given by the utility of the agent that is worst off: $u_{eg}(\pi)=\min \{u_{\pi}(i) \mid i\in N\}$. A \emph{maximum egalitarian partition} maximizes the egalitarian social welfare.
		\item A partition $\pi$ of $N$ is \emph{Pareto optimal} if there exists no partition $\pi'$ of $N$ which \emph{Pareto dominates} $\pi$, that is for all $i\in N$,  $\pi'(i) \succsim_i  \pi(i)$ and there exists at least one player $j\in N$ such that $j\in N$,  $\pi'(j) \succ_j  \pi(j)$. 
	\item \emph{Envy-freeness} is a notion of fairness. In an \emph{envy-free} partition, no player has an incentive to replace another player. 
\end{enumerate}

For the sake of brevity, we will call all the notions described above  ``optimality criteria'' although envy-freeness is rather concerned with fairness than optimality. We consider the following computational problems with respect to the optimality criteria defined above. \\


\eat{
Envy-freeness appears similar to Nash stability in which no player has an incentive to move to another coalition. However, one can produce simple examples to show that envy-freeness does not imply Nash stability and Nash stability does not imply envy-freeness.\begin{example}
	A partition which satisfies envy-freeness may not be Nash stable. Take the game $(N,\pref)$ where $N=\{1,2\}$ and where $\pref$ is specified by $v_1(2)=v_2(1)=1$. Then the partition $\pi=\{\{1\},\{2\}\}$ satisfies envy-freeness but it is not Nash stable. 
Similarly, a Nash stable partition may not satisfy envy-freeness. Take the game $(N,\pref)$ where $N=\{1,2,3\}$ where $\pref$ is specified by: $v_1(2)=1$, $v_1(3)=-1$, $v_2(3)=v_3(2)=2$ and $v_2(1)=v_3(1)=0$. Consider the partition $\pi=\{\{1\}, \{2,3\}\}$ which is Nash stable. However, $\pi$ does not satisfy envy-freeness because player $1$ is envious of player $3$ and would prefer to replace it to be with player $2$.	
\end{example}
}

\noindent
\optimality: Given $(N,\pref)$ and a partition $\pi$ of $N$, is $\pi$ optimal?\\
\exist: Does an optimal partition for a given $(N,\pref)$ exist?\\
\construct: If an optimal partition for a given $(N,\pref)$ exists, find one.\\

\exist is trivially true for all criteria of optimality concepts. By the definitions, it follows that there exist partitions which satisfy maximum utilitarian social welfare, elitist social welfare, and egalitarian social welfare respectively. 

\eat{
\begin{remark}
The partition consisting of the grand coalition and the partition of singletons satisfy the envy-freeness property. Therefore, for any representation of hedonic games, a partition which satisfies envy-freeness can computed efficiently. 
\label{remark:envy-easy}
\end{remark}

\begin{fact}
	For any hedonic game, the problem \optimality for any notion of efficiency is in $\Sigma_{2}^P$.
\end{fact}
\begin{proof}
	A partition which is optimal according to any notion of efficiency has to achieve social welfare as good as every other partition. 
\end{proof}
}

\section{Complexity of maximizing social welfare}\label{sec:sw}

In this section, we examine the complexity of maximizing social welfare in \ASHGS. 
We first observe that computing a maximum utilitarian partition for strict and symmetric preferences is NP-hard because it is equivalent to the NP-hard problem of maximizing agreements in the context of correlation clustering~\citep{BBC04a}. 
 
\begin{theorem}
Computing a maximum utilitarian partition is NP-hard in the strong sense even with symmetric and strict preferences.
\label{prop:util-hard}
\end{theorem}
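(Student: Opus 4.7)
The plan is to reduce the strongly NP-hard problem of \textsc{MaxAgree} on complete signed graphs, shown NP-hard by \citet{BBC04a} in the context of correlation clustering, to our problem, preserving the fact that all edge weights are in $\{-1,+1\}$. An instance of \textsc{MaxAgree} consists of a complete graph on vertex set $V$ together with a labelling of each edge $\{i,j\}$ as $+$ or $-$; the task is to find a partition of $V$ maximizing the number of $+$-edges inside coalitions plus the number of $-$-edges across coalitions. From such an instance I will construct an \ASHG on $N=V$ by setting $v_i(j)=+1$ if $\{i,j\}$ is a $+$-edge and $v_i(j)=-1$ otherwise. By construction the resulting preferences are symmetric and strict, and the integer weights are polynomially bounded, so the reduction preserves strong NP-hardness.

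The heart of the argument is then a simple accounting identity. For any partition $\pi$, by symmetry
\begin{equation*}
u_{ut}(\pi)\;=\;\sum_{i\in N} u_\pi(i)\;=\;2\sum_{S\in\pi}\sum_{\{i,j\}\subseteq S} v_i(j)\;=\;2\bigl(A^{+}(\pi)-D^{-}(\pi)\bigr),
\end{equation*}
where $A^{+}(\pi)$ denotes the number of $+$-edges whose endpoints share a coalition in $\pi$ and $D^{-}(\pi)$ denotes the number of $-$-edges whose endpoints share a coalition in $\pi$. Writing $E^{-}$ for the total number of $-$-edges (a constant of the instance), and letting $\mathrm{agree}(\pi)=A^{+}(\pi)+(|E^{-}|-D^{-}(\pi))$ be the number of correlation-clustering agreements, we obtain
\begin{equation*}
u_{ut}(\pi)\;=\;2\cdot\mathrm{agree}(\pi)\;-\;2|E^{-}|.
\end{equation*}
Since $|E^{-}|$ does not depend on $\pi$, a partition $\pi$ maximizes $u_{ut}$ if and only if it maximizes $\mathrm{agree}(\pi)$. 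Hence a polynomial-time algorithm for computing a maximum utilitarian partition would yield one for \textsc{MaxAgree} on complete graphs, contradicting the NP-hardness result of \citet{BBC04a}.

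The only real subtlety is to confirm strong NP-hardness rather than mere NP-hardness. This follows because the numbers appearing in the constructed \ASHG are all in $\{-1,+1\}$ and the only other quantities (coalition sums) are bounded by $|N|$, so a unary encoding suffices; the hardness of \textsc{MaxAgree} on complete $\{+,-\}$-labelled graphs is itself a strong NP-hardness statement. I do not anticipate any substantial obstacle beyond stating the reduction cleanly and verifying the accounting identity above.
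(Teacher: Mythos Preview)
Your proposal is correct and follows exactly the route the paper takes: the paper states (without spelling out details) that the problem is equivalent to maximizing agreements in correlation clustering \citep{BBC04a}, and your accounting identity $u_{ut}(\pi)=2\cdot\mathrm{agree}(\pi)-2|E^{-}|$ is precisely the equivalence they invoke. Your explicit verification of the affine relationship and of strong NP-hardness via unit weights fills in the details the paper omits.
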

\eat{
\begin{proof}

We prove Theorem~\ref{prop:util-hard} by a reduction from the {\sc MaxCut} problem. Before defining the {\sc MaxCut} problem, recall that a \emph{cut} is a partition of the vertices of a graph into two disjoint subsets. The \emph{cut-set} of the cut is the set of edges whose end points are in different subsets of the partition. In a weighted graph, the \emph{weight of the cut} is the sum of the weights of the edges in the cut-set. Then, {\sc MaxCut} is the following problem:\\

\noindent
{\sc MaxCut}  \\
\noindent
INSTANCE: An undirected weighted graph $G=(V,E)$ with a weight function $w:E\rightarrow \mathbb{R^+}$ and an integer $k$.\\
\noindent
QUESTION: Does there exist a cut of weight at least $k$ in $G$?\\

We present a polynomial-time reduction from {\sc MaxCut} to {\sc UtilSearch}, the problem of computing a maximum utilitarian partition. 
Consider an instance $I$ of  {\sc MaxCut} with a connected undirected graph $G=(V,E)$ and positive weights $w(i,j)$ for each edge $(i,j)$. Let $W=\sum_{(i,j)\in E} w(i,j)$.
We show that if there is there a polynomial-time algorithm for computing a maximum utilitarian social welfare partition, then we have a polynomial-time algorithm for {\sc MaxCut}.

Consider the following method which in polynomial time reduces $I$ to an instance $I'$ of {\sc UtilSearch}. $I'$ consists of $|V|+2$ players $N=\{m_1,\ldots, m_{|V|},s_1,s_2\}$. For any two players $m_i$ and $m_j$, $v_{m_i}(m_j)=v_{m_j}(m_i)=-w(i,j)$. For any player $m_i$ and player $s_j$, $v_{m_i}(s_j)=v_{s_j}(m_i)=W$. Also $v_{s_1}(s_2)=v_{s_2}(s_1)=-W(|V|+1)$.

We first prove that partition $\pi^*$ with maximum utilitarian social welfare $u^*$ consists of exactly two coalitions with $s_1$ and $s_2$ in different coalitions. We do so by proving two claims. 
The first claim is that every player $m_i$ is either in a coalition with $s_1$ or $s_2$. Assume this is not true and there exists a partition $\pi$ such that $u_{ut}(\pi)=u^*$ and $m_i$ is not in the same coalition with $s_1$ or $s_2$. Then, if $m_i$ joins $\pi(s_1)$, $u_{ut}(\pi)$ increases at least by $2W$ and it decreases by at most $2\sum_{j\in N}w(i,j)<2W$. Therefore, $u_{ut}(\pi)$ increases which is a contradiction. 
The second claim is that $s_1$ and $s_2$ are in different coalitions in $\pi^*$. Assume this is not true and there exists a partition $\pi$ with utilitarian social welfare $u^*$ such that $s_1$ and $s_2$ are together in a coalition. Then the welfare of $\pi$ can be increased by at least $2(|V|+1)(W)-2|V|W=2W$ if $s_2$ breaks up and forms a singleton coalition. This is a contradiction. 
 
We are now ready to present the reduction. 
Assume there exists a polynomial-time algorithm which computes a feasible maximum utilitarian social welfare partition $\pi$. From the two claims above, we can assume that partition $\pi$ has two coalitions with $s_1$ and $s_2$ in different coalitions. Then, $u_{ut}(\pi)=2(X+\sum_{m_i\notin \pi(m_j)}-v_{m_i}(m_j))$ where $X=-W+(|V|+1)W\geq 2W$ if $|V|\geq 2$. We also know that $\sum_{m_i\notin \pi(m_j)}-v_{m_i}(m_j))<W$. 
We can obtain a cut $(A,B)$ from $\pi$ where $A=\{i\mid m_i\in \pi(s_1)\}$ and $B=\{i\mid m_i\in \pi(s_2)\}$. Let the weight of the cut $(A,B)$ be $c$. We know that $c\leq c^*$ where $c^*$ is the weight of the maxcut for instance $I$. It is now shown that $(A,B)$ is a maxcut if and only if $u_{ut}(\pi)=u^*$. Assume $u_{ut}(\pi)=u^*$ but $(A,B)$ is a not a maxcut. In that case there exists a maxcut $(C,D)$ such that $\sum_{i\in C, j\in D}w(i,j)>\sum_{i\in A, j\in B}w(i,j)$. Therefore, there exists a partition 
$\pi'=\{\{s_1\cup\{m_i \mid  i\in A\}\},\{s_2\cup\{m_i \mid i\in B\}\}\}$ where 
$u_{ut}(\pi')=2(X+\sum_{m_i\notin \pi'(m_j)}-v_{m_i}(m_j))>2(X+\sum_{m_i\notin \pi(m_j)}-v_{m_i}(m_j))$. This is a contradiction as $u_{ut}(\pi)=u^*$.

Now assume that $(A,B)$ is a maxcut but $u_{ut}(\pi)<u^*$. Then there exists another partition $\pi^*$ such that $u_{ut}(\pi')= 2(X+\sum_{m_i\notin \pi'(m_j)}-v_{m_i}(m_j))=u^*$. Therefore, the graph cut corresponding to $\pi^*$ has a bigger maxcut value than $(A,B)$ which is a contradiction.\end{proof}
}
Computing a maximum elitist partition is much easier. For any player $i$, let $F(i)=\{j \mid v_i(j)> 0\}$ be the set of players which $i$ strictly likes and $f(i)=\sum_{j\in F(i)}v_i(j)$. Both $F(i)$ and $f(i)$ can be computed in linear time. Let $k\in N$ be the player such that $f(k)\geq f(i)$ for all $i\in N$. Then $\pi=\{\{\{k\}\cup F(k)\}, N\setminus \{\{k\}\cup F(k)\}\}$ is a partition which maximizes the elitist social welfare. As a corollary, we can verify whether a partition $\pi$ has maximum elitist social welfare by computing a partition $\pi^*$ with maximum elitist social welfare and comparing $u_{el}(\pi)$ with $u_{el}(\pi^*)$. Just like maximizing the utilitarian social welfare, maximizing the egalitarian social welfare is hard.

\eat{
\begin{proposition}
There exists a polynomial-time algorithm to compute a maximum elitist partition.
\label{prop:elite-easy}
\end{proposition}
\begin{proof}
Recall that for any player $i$, $F(i)=\{j \mid v_i(j)> 0\}$. Let $f(i)=\sum_{j\in F(i)}v_i(j)$. Both $F(i)$ and $f(i)$ can be computed in linear time. Let $k\in N$ be the player such that $f(k)\geq f(i)$ for all $i\in N$. Then 
$\pi=\{\{\{k\}\cup F(k)\}, N\setminus \{\{k\}\cup F(k)\}\}$ is a partition which maximizes the elitist social welfare. 
\end{proof}

As a corollary, we can verify whether a partition $\pi$ has maximum elitist social welfare by computing a partition $\pi^*$ with maximum elitist social welfare and comparing $u_{el}(\pi)$ with $u_{el}(\pi^*)$. Just like maximizing the utilitarian social welfare, maximizing the egalitarian social welfare is hard:
}

\begin{theorem}\label{prop:egal-hard}
Computing a maximum egalitarian partition is NP-hard in the strong sense. 
\end{theorem}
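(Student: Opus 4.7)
The plan is to reduce from \textsc{3-Partition}, a classic strongly NP-hard problem: given positive integers $a_1,\ldots,a_{3m}$ with $\sum_i a_i = mB$ and $B/4 < a_i < B/2$, decide whether the items can be partitioned into $m$ triples each summing exactly to $B$. The restriction on the $a_i$'s guarantees that any subset of items summing to $B$ must have cardinality exactly three, which will let the egalitarian objective silently enforce the ``triples'' part. Since we only need polynomially bounded utilities to witness strong NP-hardness, we can use the $a_i$'s directly.

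Given a \textsc{3-Partition} instance, I would build an ASHG on $N = \{p_1,\ldots,p_{3m}\} \cup \{q_1,\ldots,q_m\}$, thinking of the $p_i$ as items and the $q_j$ as ``bin markers'', and set a threshold of $k = B$ for the egalitarian welfare. The utilities would be
\[
v_{q_j}(p_i) = a_i,\quad v_{p_i}(q_j) = B,\quad v_{p_i}(p_{i'}) = 0,\quad v_{q_j}(q_{j'}) = -M,
\]
for $i \neq i'$ and $j \neq j'$, where $M = mB + 1$ (so that any two markers sharing a coalition make at least one marker drop below the threshold). All numbers are polynomially bounded in the input length.

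Correctness requires arguing that a partition $\pi$ achieves $u_{eg}(\pi) \ge B$ if and only if the \textsc{3-Partition} instance is solvable. For the ``only if'' direction, I would argue in stages: (i)~no two markers can share a coalition, since then one of them has utility at most $mB - M < 0 < B$; (ii)~every item player $p_i$ must be grouped with at least one marker, since otherwise its utility is $0 < B$; (iii)~hence every coalition contains exactly one marker together with a set of items, and each marker $q_j$'s utility $\sum_{p_i \in \pi(q_j)} a_i$ must be at least $B$; (iv)~as the items sum to $mB$ in total across $m$ disjoint marker-coalitions, each must sum to exactly $B$; (v)~the assumption $B/4 < a_i < B/2$ forces each such coalition to hold exactly three items, yielding a valid 3-partition. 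The ``if'' direction is immediate: given a valid 3-partition $\{T_1,\ldots,T_m\}$, the partition $\pi = \{\{q_j\} \cup T_j : 1 \le j \le m\}$ gives every marker utility exactly $B$ and every item utility exactly $B$, so $u_{eg}(\pi) = B$.

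The main obstacle is calibrating the utilities so that the egalitarian minimum simultaneously forces markers apart, forces items to be attached to markers, and pins the item-mass in each marker's coalition to exactly~$B$; the asymmetric choice of $v_{p_i}(q_j) = B$ (rather than something proportional to $a_i$) is what makes all item players reach the threshold $B$ regardless of their own size, and the large negative $-M$ is what cleanly separates markers without requiring extra gadgetry.
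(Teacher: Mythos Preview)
Your reduction is correct and structurally almost identical to the paper's: both build an \ASHG with ``bin/machine'' players that strongly dislike one another and ``item/job'' players that are indifferent among themselves but strongly value any bin, with bins valuing items by their size. The paper reduces from \textsc{MaxMinMachineCompletionTime} (the search problem of maximising the minimum machine load), setting $v_{s_j}(i)=P=\sum_\ell p_\ell$ and $v_i(i')=-(P{+}1)$, and then argues that any maximum egalitarian partition must separate all machine players and attach every job to some machine, so that the optimum corresponds exactly to an optimal schedule. You instead reduce from the decision problem \textsc{3-Partition} with threshold $B$, which buys you a cleaner counting argument in step~(iv) (the item masses must sum to exactly $B$ per bin) and automatically enforces the triple structure via the $B/4<a_i<B/2$ constraint; the paper's route is slightly more general (arbitrary numbers of jobs and machines, no size restrictions) but needs the search-to-search correspondence rather than a crisp yes/no threshold. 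Either way, the gadget and the two structural claims (bins separate; every item joins a bin) are the same.
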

\begin{proof}
We provide a polynomial-time reduction from the NP-hard problem {\sc MaxMinMachineCompletionTime}~\citep{Woeg97a} in which an instance consists of a set of $m$ identical machines $M=\{M_1,\ldots, M_m\}$, a set of $n$ independent jobs $J= \{J_1,\ldots,J_n\}$  where job $J_i$ has processing time $p_i$. The problem is to allot jobs to the machines such that the minimum processing time (without machine idle times) of all machines is maximized.
\eat{
	We provide a polynomial-time reduction from the following NP-hard problem~\citep{Woeg97a}: \\
	
\noindent
{\sc MaxMinMachineCompletionTime} \\
\noindent
INSTANCE: A set of $m$ identical machines $M=\{M_1,\ldots, M_m\}$, a set of $n$ independent jobs $J= \{J_1,\ldots,J_n\}$  where job $J_i$ has processing time $p_i$.\\
\noindent
OUTPUT: Allot jobs to the machines such that the minimum processing time (without machine idle times) of all machines is maximized. \\
	}
Let $I$ be an instance of {\sc MaxMinMachineCompletionTime} and let $P=\sum_{i=1}^n p_i$. From $I$ we construct an instance $I'$ of {\sc EgalSearch}. The \ASHG for instance $I'$ consists of $N=\{i \mid M_i\in M\}\cup \{s_i \mid  J_i\in J\}$ and the preferences of the players are as follows: for all $i=1,\ldots m$ and all $j=1,\ldots, n$ let $v_i(s_j)=p_j$ and $v_{s_j}(i)=P$. Also, for $1\leq i,i' \leq m, i\neq i'$ let $v_i(i')=-(P+1)$ and for $1\leq j,j' \leq n, j\neq j'$ let $v_{s_j}(v_{s_{j'}})=0$. Each player $i$ corresponds to machine $M_i$ and each player $s_j$ corresponds to job $J_j$. 
	
	Let $\pi$ be the partition which maximizes $u_{eg}(\pi)$. We show that players $1,\ldots, m$ are in separate coalitions and each player $s_j$ is in $\pi(i)$ for some $1\leq i\leq m$. We can do so by proving two claims. 
	The first claim is that for $i,j\in \{1,\ldots m\}$ such that $i\neq j$, we have that $i\notin \pi(j)$. The second claim is that each player $s_j$ is in a coalition with a player $i$. The proofs of the claims are omitted due to space limitations.
	
	\eat{
Let $\pi$ be the partition which maximizes $u_{eg}(\pi)$. We show that players $1,\ldots, m$ are in separate coalitions and each player $s_j$ is in $\pi(i)$ for some $1\leq i\leq m$. We do so by proving two claims. 
The first claim is that for $i,j\in \{1,\ldots m\}$ such that $i\neq j$, we have that $i\notin \pi(j)$. 
Assume there exist exactly two players $i$ and $j$ for which this is not the case. Then we know that $u_{\pi}(i)=-(P+1) +\sum_{s_j\in \pi(i)}p_j$. Since $\sum_{s_j\in \pi(i)}p_j\leq P$, we know that $u_{\pi}(i)=u_{\pi}(j)<0$, $u_{\pi}(a)\geq 0$ for all $a\in N\setminus \{i,j\}$ and thus $u_{eg}(\pi)<0$. However, if $i$ deviates and forms 
a singleton coalition in new partition $\pi'$, then $u_{\pi'}(i)=0$ and $u_{\pi'}(j)\geq 0$ and the utility of other players has not decreased. Therefore, $u_{eg}(\pi')\geq 0$ which is a contradiction.

The second claim is that each player $s_j$ is in a coalition with a player $i$. Assume this was not the case so that there exists at least one such player $s_j$. Since we already know that all $i$s are in separate coalitions, then $u_{\pi}(a)>0$ for all $a\in N\setminus\{s_j\}$ and $u_{eg}(\pi)=u_{\pi}(s_j)=0$. Then $s_j$ can deviate and join $\pi(i)$ for any $1\leq i\leq m$ to form a new partition $\pi'$. By that, the utility of no player decreases and $u_{\pi'}(s_j)>0$. If this is done for all such $s_j$, we have $u_{eg}(\pi')> 0$ for the new partition $\pi'$ which is a contradiction. 
}

A job allocation $\mathrm{Alloc}(\pi)$ corresponds to a partition $\pi$ where $s_j$ is in $\pi(i)$ if job $J_j$ is assigned to $M_i$ for all $j$ and $i$. Note that the utility $u_{\pi}(i)=\sum_{s_j\in \pi(i)}v_i(s_j)=\sum_{s_j\in \pi(i)}p_j$ of a player corresponds to the total completion time of all jobs assigned to $M_i$ according to $\mathrm{Alloc}(\pi)$. Let $\pi^*$ be a maximum egalitarian partition. Assume that there is another partition $\pi'$ and $\mathrm{Alloc}(\pi')$ induces a strictly greater minimum completion time. We know that $u_{\pi^*}(s_j)=u_{\pi''}(s_j)=P$ for all $1\leq j \leq n$ and $u_{\pi^*}(i)\leq P$ for all $1\leq i \leq m$. But then from the assumption we have $u_{eg}(\pi')>u_{eg}(\pi^*)$ which is a contradiction.\end{proof}


\section{Complexity of Pareto optimality}\label{sec:PO}

We now consider the complexity of computing a Pareto optimal partition.
The complexity of Pareto optimality has already been considered in several settings such as house allocation~\citep{ACMM05a}. \citet{BoLa08a} examined the complexity of Pareto optimal allocations in resource allocation problems. We show that checking whether a partition is Pareto optimal is hard even under severely restricted settings.

\eat{
\begin{example}
	Consider a hedonic game $(N,\pref)$ where $N=\{1,2,3\}$ and the $\pref$ is defined as follows:$v_1(2)=v_1(3)=2$, $v_2(1)=v_3(1)=-1$ and $v_2(3)=v_3(2)=10$. 
Consider the partition $\pi$ consisting of the grand coalition. Then $u_{\pi}(1)=4$, $u_{\pi}(2)=9$ and $u_{\pi}(9)=4$. It is easy to see that $\pi$ is Pareto optimal. The players $2$ and $3$ improve their welfare to $10$ each if they break away and form the coalition $\{2,3\}$. However, player $1$ is worse off and gets utility $0$.
\end{example}
}

\begin{theorem}
The problem of checking whether a partition is Pareto optimal is coNP-complete in the strong sense, even when preferences are symmetric and strict. 
\label{prop:check-PO-symm-hard}
\end{theorem}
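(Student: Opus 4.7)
The plan has two parts: membership in coNP, and coNP-hardness in the strong sense via a reduction from a strongly NP-complete problem.

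For membership, a partition $\pi$ fails to be Pareto optimal exactly when some other partition $\pi'$ Pareto dominates it, so such a $\pi'$ serves as a polynomial-size certificate: given $\pi$ and any candidate $\pi'$, one computes $u_\pi(i)$ and $u_{\pi'}(i)$ for every $i\in N$ and checks the Pareto-dominance inequalities directly. All intermediate quantities are polynomial in the input size, so the problem lies in coNP in the strong sense.

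For hardness, I would reduce from \emph{Exact Cover by $3$-Sets} (X3C), which is strongly NP-complete. Given an instance $(X,\mathcal{C})$ with $|X|=3q$ and $\mathcal{C}=\{C_1,\dots,C_m\}$, I would construct an \ASHG whose player set contains one element-player $p_x$ for each $x\in X$, one set-player $s_j$ for each $C_j\in\mathcal{C}$, and a small number of auxiliary ``anchor'' players used to pin down a baseline utility. The symmetric, strict weights would be chosen so that (i) $v(p_x,s_j)$ is slightly positive if $x\in C_j$ and a large negative number otherwise, (ii) $v(s_j,s_k)$ is negative so set-players repel one another, and (iii) the anchor weights keep a designated baseline partition $\pi$ in a position where no single player can profit unilaterally. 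The designated $\pi$ itself would place each player with its anchor(s), while coalitions of the form $\{s_j\}\cup\{p_x:x\in C_j\}$ drawn from an exact cover strictly improve the utility of the participating set-players without harming anyone else.

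The equivalence to prove is: $\pi$ is Pareto optimal iff $\mathcal{C}$ contains no exact cover of $X$. One direction converts an exact cover directly into a dominating $\pi'$. The other direction, given any $\pi'$ that Pareto dominates $\pi$, uses the large negative weights on incompatible pairs to force every element-player to sit only with compatible set-players, the repulsive $v(s_j,s_k)$ to force at most one set-player per coalition, and the anchor structure to force each element-player to actually be grouped with some set-player; reading off the selected $C_j$'s yields an exact cover.

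The main obstacle is juggling \emph{strictness}, \emph{symmetry}, and \emph{polynomial weight bounds} simultaneously. Strictness forbids zero weights, so the natural ``indifferent'' pairs must be replaced by $\pm 1$ perturbations without opening up spurious Pareto improvements. Symmetry couples both sides of every preference, so the weight that makes $p_x$ want to join $s_j$ is also what makes $s_j$ want to accept $p_x$, tightly constraining the baseline utilities. Finally, strong coNP-hardness demands that all weights be polynomially bounded in $q+m$, so the ``large'' penalties must be realized as small integer values whose cumulative effect is amplified by a polynomial number of anchor players rather than by exponential magnitudes. Designing the anchor gadget so that unilateral deviations from $\pi$ are blocked while only a globally consistent exact-cover rearrangement dominates $\pi$ is the subtle part of the reduction.
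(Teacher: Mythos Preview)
Your proposal is correct and follows essentially the same approach as the paper: a reduction from Exact Cover by 3-Sets in which element-players, set-players, and auxiliary anchor players are arranged so that a fixed baseline partition is Pareto dominated if and only if an exact cover exists. The paper's concrete gadget attaches a three-player anchor chain $w^s,x^s,y^s$ (with symmetric weights $3,3,-1$) to each set $s$, puts all element-players together in the baseline with mutual weight $1/(|R|-1)$ to enforce strictness, and uses $-7$ as the uniform ``large negative'' penalty---thereby realizing exactly the outline and weight-balancing constraints you describe with polynomially bounded integers and one rational.
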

\begin{proof}

	The reduction is from the NP-complete problem E3C (EXACT-3-COVER) to deciding whether a given partition is Pareto dominated by another partition or not. Recall that in E3C, an instance is a pair $(R,S)$, where $R=\{1,\ldots, r\}$ is a set and $S$ is a collection of subsets of 
	$R$ such that $|R|=3m$ for some positive integer $m$ and $|s| = 3$ for each 
	$s\in S$. The question is whether there is a sub-collection $S'\subseteq S$ which is a partition of $R$.
	
	\eat{
	We recall the E3C problem.\\
	
	\noindent
	{\sc E3C (EXACT-3-COVER)}: \\
	\noindent
	INSTANCE: A pair $(R,S)$, where $R=\{1,\ldots, r\}$ is a set and $S$ is a collection of subsets of 
	$R$ such that $|R|=3m$ for some positive integer $m$ and $|s| = 3$ for each 
	$s\in S$. \\
	\noindent
	QUESTION: Is there a sub-collection $S'\subseteq S$ which is a partition of $R$? \\
}
	It is known that E3C remains NP-complete even if each $r\in R$ occurs in 
	at most three members of $S$~\citep{GaJo79a}. 
	Let $(R,S)$ be an instance of E3C. 
	$(R,S)$ can be reduced to an instance $((N,\pref),\pi)$, where $(N,\pref)$ is an \ASHG defined in the following way. Let $N=\{w^s, x^s, y^s \mid  s \in S\} \cup \{z^r \mid r\in R \}$. The players preferences are symmetric and strict and are defined as follows:
$v_{w^s}(x^s)= v_{x^s}(y^s)=3$ for all $s\in S$; $v_{y^s}(w^s)=v_{y^s}(w^{s'})=-1$ for all $s,s'\in S$; $v_{y^s}(z^r)=1$ if $r\in s$  and $v_{y^s}(z^r)=-7$ if $r\notin s$; $v_{z^r}(z^{r'})=1/(|R|-1)$ for any $r,r'\in R$; and $v_{a}(b)=-7$ for any $a,b\in N$ and $a\neq b$ for which $v_{a}(b)$ is not already defined.

\eat{
	\begin{itemize}
	\item $v_{w^s}(x^s)= v_{x^s}(y^s)=3$ for all $s\in S$
	\item $v_{y^s}(w^s)=v_{y^s}(w^{s'})=-1$ for all $s,s'\in S$ 
	\item $v_{y^s}(z^r)=1$ if $r\in s$  and $v_{y^s}(z^r)=-1$ if $r\notin s$ and 
	\item $v_{z^r}(z^{r'})=1/(|R|-1)$ for any $r,r'\in R$
    \item $v_{a}(b)=-7$ for any $a,b\in N$ and $a\neq b$ for which $v_{a}(b)$ is not already defined,
    \end{itemize}
}

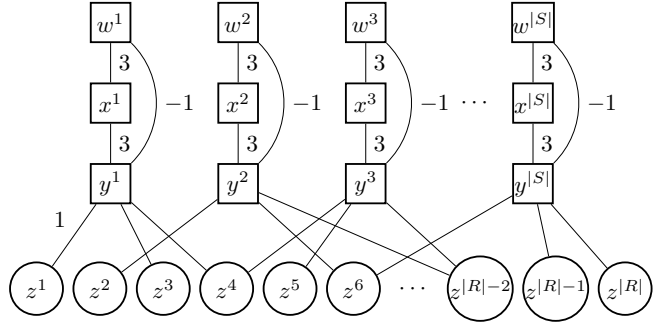
\begin{figure}[tb]
	\centering
	\scalebox{0.87}{
	\begin{tikzpicture}[auto, player/.style={circle,draw=black!100,fill=black!00,thick,minimum size=8mm,inner sep=0pt, node distance=0.13},
	3set/.style={rectangle,draw=black!100,fill=black!00,thick,minimum size=6mm,inner sep=0pt,node distance=1.3},
	control/.style={rectangle,draw=black!100,fill=black!00,thick,minimum size=6mm,inner sep=0pt, node distance=0.6},scale=0.8]

	\node[player] (z1)  at (0,0) {$z^1$};
	\node[player] (z2) [right=of z1] {$z^2$};
	\node[player] (z3) [right=of z2] {$z^3$};
	\node[player] (z4) [right=of z3] {$z^4$};
	\node[player] (z5) [right=of z4] {$z^5$};
	\node[player] (z6) [right=of z5] {$z^6$};
	\node[node distance = 0.15] (player-dots) [right=of z6] {$\cdots$};
	\node[player] (zr-2) [right=of player-dots] {$z^{|R|-2}$};
	\node[player] (zr-1) [right=of zr-2] {$z^{|R|-1}$};
	\node[player] (zr) [right=of zr-1] {$z^{|R|}$};

	\node[3set] (y1) at (1.4,2) {$y^1$};
	\node[3set, right=of y1] (y2) {$y^2$};
	\node[3set, right=of y2] (y3) {$y^3$};
	\node[3set, node distance = 1.9, right=of y3] (ym) {$y^{|S|}$};

	\node[control] (x1) [above=of y1] {$x^1$};
	\node[control] (x2) [above=of y2] {$x^2$};
	\node[control] (x3) [above=of y3] {$x^3$};
	\node[control] (xm) [above=of ym] {$x^{|S|}$};
	\node[control] (w1) [above=of x1] {$w^1$};
	\node[control] (w2) [above=of x2] {$w^2$};
	\node[control] (w3) [above=of x3] {$w^3$};
	\node[control] (wm) [above=of xm] {$w^{|S|}$};

	\node[node distance = 1.0, right=of x3] (x-dots) {$\cdots$};

	\draw (y1)    to  node [swap ]{$1$} (z1);
	\draw[-] (y1) to (z3);
	\draw[-] (y1) to (z4);
	\draw[-] (y2) to (z2);
	\draw[-] (y2) to (z6);
	\draw[-] (y2) to (zr-2);
	\draw[-] (y3) to (z4);
	\draw[-] (y3) to (z5);
	\draw[-] (y3) to (zr-2);
	\draw[-] (ym) to (z6);
	\draw[-] (ym) to (zr-1);
	\draw[-] (ym) to (zr);

	\draw[-] (w1) to node {$3$} (x1);
	\draw[-] (w2) to node {$3$} (x2);
	\draw[-] (w3) to node {$3$} (x3);
	\draw[-] (wm) to node {$3$} (xm);

	\draw[-] (x1) to node {$3$} (y1);
	\draw[-] (x2) to node {$3$} (y2);
	\draw[-] (x3) to node {$3$} (y3);
	\draw[-] (xm) to node {$3$} (ym);

	\draw[-] (w1) to [bend left=45] node {$-1$} (y1) ;
	\draw[-] (w2) to [bend left=45] node {$-1$} (y2) ;
	\draw[-] (w3) to [bend left=45] node {$-1$} (y3) ;
	\draw[-] (wm) to [bend left=45] node {$-1$} (ym) ;

	\end{tikzpicture}
}
	\scriptsize
	\caption{A graph representation of an ASHG derived from an instance of E3C. The (symmetric) utilities are given as edge weights. Some edges and labels are omitted: All edges between any $y^s$ and $z^r$ have weight $1$ if $r\in s$. All $z^{r'},z^{r''}$ with $r'\neq r''$ are connected with weight $\frac{1}{|R|-1}$. All other edges missing in the complete undirected graph have weight $-7$. }
\end{figure}

The partition $\pi$ in the instance $((N,\pref),\pi)$ is 
$\{\{x^s, y^s\}, \{w^s\} \mid s \in S\}\}\cup \{\{z^r\mid r\in R\}\}.$
We see that the utilities of the players are as follows: $u_{\pi}(w^s)= 0$ for all $s\in S$; $u_{\pi}(x^s)= u_{\pi}(y^s)=3$ for all $s\in S$; and $u_{\pi}(z^r)= 1$ for all $r\in R$.

Assume that there exists $S'\subseteq S$ such that $S'$ is a partition of $R$. Then we prove that $\pi$ is not Pareto optimal and there exists another partition $\pi'$ of $N$ which Pareto dominates $\pi$.
We form another partition $\pi'=\{\{x^s, w^s\}\mid s \in S'\}\cup \{\{y^s,z_i, z_j, z_k\}\mid s\in S' \wedge i, j, k\in s\} \cup \{\{x^s, y^s\}, \{w^s\}\mid s \in (S\setminus S')\}\}.$

In that case, $u_{\pi'}(w^s)= 3$ for all $s\in S'$; $u_{\pi'}(w^s)= 0$ for all $s\in S\setminus S'$; $u_{\pi}(x^s)= u_{\pi}(y^s)=3$ for all $s\in S$; and $u_{\pi}(z^r)= 1+2/(|R|-1)$ for all $r\in R$. Whereas the utilities of no player in $\pi'$ decreases, the utility of some players in $\pi'$ is more than in $\pi$. Since $\pi'$ Pareto dominates $\pi$, $\pi$ is not Pareto optimal.

We now show that if there exists no $S'\subseteq S$ such that $S'$ is a partition of $R$, then $\pi$ is Pareto optimal. We note that $-7$ is a sufficiently large negative valuation to ensure that if $v_a(b)=v_b(a)=-7$, then $a,b\in N$ cannot be in the same coalition in a Pareto optimal partition. 
For the sake of contradiction, assume that $\pi$ is not Pareto optimal and there exists a partition $\pi'$ which Pareto dominates $\pi$. 
We will see that if there exists a player $i\in N$ such that $u_{\pi'}>u_{\pi}$, then there exists at least one $j\in N$ such that $u_{\pi'}<u_{\pi}$. 
The only players whose utility can increase (without causing some other player to be less happy) are $\{x^s\mid s\in S\}$, $\{w^s \mid s\in S\}$ or $\{z^r\mid r\in R\}$. 
We consider these player classes separately.
If the utility of player $x^s$ increases, it can only increase from $3$ to $6$ so that $x^s$ is in the same coalition as $y^s$ and $w^s$. However, this means that $y^s$ gets a decreased utility. The utility of $y^s$ can increase or stay the same only if it forms a coalition with some $z^r$s. However in that case, to satisfy all $z^r$s, there needs to exist an $S'\subseteq S$ such that $S'$ is a partition of $R$.

Assume the utility of a player $w^s$ for $s\in S$ increases. This is only possible if $w^s$ is in the same coalition as $x^s$. Clearly, the coalition formed is $\{w^s, x^s\}$ because coalition $\{w^s, x^s, y^s\}$ brings a utility of 2 to $y^s$. In that case $y^s$ needs to form a coalition $\{y^s, z_i, z_j,z_k\}$ where $s=\{i,j,k\}$. If $y^s$ forms a coalition $\{y^s, z_i, z_j,z_k\}$, then all players $y^{s'}$ for $s'\in (S\setminus \{s\})$ need to form coalitions of the form $\{y^{s'}, z_{i'}, z_{j'},z_{k'}\}$ such that $s'=\{i',j',k'\}$. Otherwise, their utility of $3$ decreases. This is only possible if there exists a set $S'\subseteq S$ of $R$ such that $S'$ is a partition of~$R$. 

Assume that there exists a partition $\pi'$ that Pareto dominates $\pi$ and the utility of a player $u_{\pi'}(z^r)>u_{\pi}(z^r)$ for some $r\in R$. This is only possible if each $z^r$ forms the coalition of the form $\{z^r,z^{r'}, z^{r''}, y^s\}$ where $s=\{r,r',r''\}$. This can only happen if there exists a set $S'\subseteq S$ of $R$ such that $S'$ is a partition of~$R$.\end{proof}

The fact that checking whether a partition is Pareto optimal is coNP-complete has no obvious implications on the complexity of 
\emph{computing} a Pareto optimal partition. In fact there is a simple polynomial-time algorithm to compute a partition which is Pareto optimal for strict preferences.

\begin{theorem}
	For strict preferences, a Pareto optimal partition can be computed in polynomial time.	
	\label{prop:PO-easyINL}
\end{theorem}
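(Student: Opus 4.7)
My plan is a simple recursive algorithm. Initialize $N' = N$; at each stage pick an arbitrary $i \in N'$, form the coalition $C_i = \{i\} \cup \{j \in N' \setminus \{i\} : v_i(j) > 0\}$, output $C_i$ as a block of the partition, and recurse on $N' \setminus C_i$ (with preferences inherited from $\pref$). Since $|N'|$ strictly decreases at each step and each step takes $O(n)$ time, the total running time is $O(n^2)$.

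Correctness goes by induction on $|N|$. Let $\pi$ be the returned partition, let $i$ be the player picked in the outermost call, and suppose for contradiction that some $\pi'$ Pareto dominates $\pi$. Observe that $u_{\pi}(i) = \sum_{j \in C_i \setminus \{i\}} v_i(j)$ is the maximum utility player $i$ can attain in any coalition containing $i$: every $j$ with $v_i(j) > 0$ is included and every $j$ with $v_i(j) < 0$ is excluded. Hence $u_{\pi'}(i) \le u_{\pi}(i)$, and since $\pi'$ weakly improves $\pi$ for every player we must have $u_{\pi'}(i) = u_{\pi}(i)$.

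The decisive use of strict preferences is uniqueness: because no $v_i(j)$ equals zero, $C_i$ is the unique coalition attaining this maximum, so $\pi'(i) = C_i$. Consequently every player in $C_i$ has the same coalition, and hence the same utility, under $\pi$ and $\pi'$, so any strict gain in $\pi'$ must occur for some player in $N \setminus C_i$. But the restrictions $\pi|_{N \setminus C_i}$ and $\pi'|_{N \setminus C_i}$ are partitions of $N \setminus C_i$ in the induced sub-game, and the above shows that $\pi'|_{N \setminus C_i}$ Pareto dominates $\pi|_{N \setminus C_i}$ there, contradicting the inductive hypothesis that the recursive call returns a Pareto optimal partition of $N \setminus C_i$.

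The only genuinely subtle step is the uniqueness argument that pins down $\pi'(i) = C_i$; without strictness, an indifference ($v_i(j) = 0$ for some $j$) would allow $\pi'$ to alter $i$'s coalition without changing $i$'s utility, and the induction would collapse. Everything else—the termination analysis and the reduction to a sub-game—is routine, so I anticipate no serious obstacles beyond making the uniqueness step precise.
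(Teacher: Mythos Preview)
Your algorithm is precisely the serial dictatorship procedure the paper invokes, and your inductive correctness argument---hinging on the uniqueness of the first dictator's optimal coalition under strict preferences---is a correct and more explicit version of the paper's one-sentence justification that ``for any other partition $\pi'$, at least one dictator will strictly prefer $\pi$ to $\pi'$.'' The approach is essentially identical.
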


\begin{proof}
	The statement follows from an application of \emph{serial dictatorship}. \emph{Serial dictatorship}~\citep{AbSo98a} is a well-known mechanism in resource allocation in which an arbitrary player is chosen as the `dictator' who is then given his most favored allocation and the process is repeated until all players or resources have been dealt with. In the context of coalition formation, serial dictatorship is well-defined if preferences of players over coalitions are strict. 
	Serial dictatorship is also well-defined for \ASHGS with strict preferences as the dictator forms a coalition with all the players he strictly likes who have been not considered as dictators or are not already in some dictator's coalition. The resulting partition $\pi$ is such that for any other partition $\pi'$, at least one dictator will strictly prefer $\pi$ to $\pi'$. Therefore $\pi$ is Pareto optimal.
\end{proof}

\eat{
\begin{proof}

We first describe the  algorithm.
Set RemainingPlayers to $N$ and set $i$ to 1. Take any player $l_i\in$ RemainingPlayers and form a coalition $S_i$ in which players $j\in$ RemainingPlayers such that $v_{l_i}(j)>0$ are added. 
Player $l_i$ will be called the \emph{leader} of coalition $S_i$. 
Remove $S_i$ from RemainingPlayers. Increment $i$ by 1 and repeat until RemainingPlayers $=\emptyset$. Return $\{S_1,\ldots, S_m\}$.

We can prove the correctness of the algorithm via induction on the number of coalitions formed.
The induction hypothesis is: \emph{Consider the $k$th first formed coalitions $S_1,\ldots, S_k$. Assume, there exists a partition $\pi'\neq \pi$, such that $\pi'$ Pareto dominates $\pi$. Then $S_1,\ldots, S_k\in \pi'$.
} The complete proof is omitted due to space limitations.\end{proof}

\eat{ 
Less formally and in other words, the hypothesis can be stated as follows:
\emph{Assume that the first $k$ coalitions $S_1,\ldots, S_k$ have formed. Then neither of the following can happen:
\begin{enumerate}
\item Some players from $S_1,\ldots, S_k$ move out of their respective coalitions and cause a Pareto improvement.
\item Some players from $N\setminus \bigcup_{i\in\{1,\ldots,k\}}S_i$ move to players in coalitions $S_1,\ldots, S_k$ and cause a Pareto improvement.
\end{enumerate}
}

\noindent
\textbf{Base case:} Consider the coalition $S_1$. Then $l_1$, the leader of $S_1$ has no incentive to leave. If he leaves with a subset of players in $S_1$, he can only become less happy. Other players from $S_1$ cannot leave $S_1$ because their leaving makes at least one player less happy. The only possibility left is if $S_1$ joins $B\subseteq (N\setminus S_1)$ to cause a Pareto improvement. We know that this is not possible as player $l_1$ would be worse off. Similarly, no player $j$ can move from $N\setminus S_1$ and cause a Pareto improvement because $l_1$ becomes worse off.

\noindent
\textbf{Induction step:} Assume that the hypothesis is true. Then we prove that the same holds for the formed coalitions $S=S_1,\ldots , S_k,S_{k+1}$. By the hypothesis, we know that player cannot leave coalitions $S_1,\ldots, S_k$ and cause a Pareto improvement and since preferences are strict, no player can move from $N\setminus \bigcup_{i\in\{1,\ldots,k\}}S_i$ move to coalitions in $S_1,\ldots, S_k$ and cause a Pareto improvement as at least one player in $S_{k+1}$ dislike him.

Now consider $S_{k+1}$. The leader of $S_{k+1}$ is $l_{k+1}$. We first show that $l_{k+1}$ cannot cause a Pareto improvement by moving to a coalition outside of $S_{k+1}$. This is clear because $l_{k+1}$ can only lose utility when he leaves coalition $S_{k+1}$ with a subset of or all of the players. Similarly, other players in $S_{k+1}$ cannot move out of $S_{k+1}$ without decreasing the payoff of some player in $S_{k+1}$. Similarly, since the preferences are strict, no player can move from $N\setminus \bigcup_{i\in\{1,\ldots,k+1\}}S_i$ and cause a Pareto improvement.
}
}

A standard criticism of Pareto optimality is that it can lead to inherently unfair allocations. To address this criticism, the algorithm can be modified to obtain less lopsided partitions. Whenever an arbitrary player is selected to become the dictator among the remaining players, choose a player that does not get extremely high elitist social welfare among the remaining players. 
Nevertheless, even this modified algorithm may output an partition that fails to be individually rational.

We know that the set of partitions which are both Pareto optimal and individually rational is non-empty. Repeated Pareto improvements on individually rational partition consisting of singletons leads to a Pareto optimal and individually rational partition. We show that computing a Pareto optimal and individually rational partition for \ASHGS is weakly NP-hard.

\begin{theorem}
	Computing a Pareto optimal and individually rational partition is weakly NP-hard.
\end{theorem}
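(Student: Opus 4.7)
The plan is to reduce from the weakly NP-complete problem \textsc{Partition}: given positive integers $a_1, \ldots, a_n$ with $\sum_{i=1}^n a_i = 2B$, decide whether there is $S \subseteq \{1, \ldots, n\}$ with $\sum_{i \in S} a_i = B$. From such an instance I would construct, in time polynomial in $n$ and in the magnitudes of the $a_i$, an \ASHG $(N, \pref)$ containing one ``item'' player for each $a_i$ together with a constant number of auxiliary ``gadget'' players, most importantly two ``collector'' players $c_1, c_2$. The collectors would have utilities calibrated so that the \textsc{Partition} target $B$ appears as a threshold in their individual-rationality constraints: for example $v_{c_k}(i) = a_i$ for each item and a mutual valuation $v_{c_1}(c_2) = v_{c_2}(c_1)$ close to $-B$. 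Items would have small or zero mutual utilities, with possibly one further auxiliary player used to destroy trivial PO candidates.

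The analysis would proceed in two steps. First, I would show that any individually rational partition is forced by the gadget into a restricted skeletal form; e.g., $c_1$ and $c_2$ must either be apart with specific restrictions on the items that can accompany them, or be together in a coalition whose total item weight is constrained by the choice of $v_{c_1}(c_2)$. Second, I would argue that within this restricted family, Pareto optimality pins down the item distribution precisely to a split in which the sums on the two sides equal $B$ -- equivalently, to a \textsc{Partition} solution. The converse direction, that any \textsc{Partition} solution yields a PO and IR partition by assigning the corresponding subset to one collector's side, is immediate from the construction. A polynomial-time algorithm for finding a PO+IR partition can then be converted into a polynomial-time \textsc{Partition} decision procedure by reading off from the returned coalition structure which items ended up with which collector.

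The main obstacle is the gadget design itself. As observed in the paragraph preceding the theorem, a PO and IR partition always exists (iterate Pareto improvements starting from the singleton partition, maintaining IR), so the hardness cannot come from existence but must be encoded in the structural complexity of the PO+IR set. The challenge is to rule out all ``trivial'' PO+IR candidates -- singletons, the grand coalition, and in particular the partition placing all items with a single collector -- so that the only surviving PO+IR partitions correspond to balanced splits of the item weights. Because Pareto optimality is a global non-dominance condition that interacts subtly with additive utilities, this calibration is delicate: naive gadgets (for example, two symmetric hubs that both like items) tend to admit cheap PO+IR partitions such as the singleton-extension of $\{c_1,c_2\}$ or an unbalanced assignment, which do not encode any constraint. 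The technical heart of the proof is therefore the choice of utilities (and perhaps an additional witness player whose Pareto constraint ties down the collector structure) that ensures every PO+IR partition necessarily exhibits the balance condition demanded by \textsc{Partition}.
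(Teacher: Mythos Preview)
Your proposal correctly identifies the overall shape of the argument---reduce from a weakly NP-hard numerical problem, use individual rationality to impose a threshold constraint, and use Pareto optimality to force exactness---but, as you yourself concede, you have not actually supplied the gadget. Since the gadget \emph{is} the proof, what you have written is a plan rather than a proof, and the plan as sketched does not go through.

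Concretely, the gadget you outline (two symmetric collectors $c_1,c_2$ with $v_{c_k}(i)=a_i$ and $v_{c_1}(c_2)=v_{c_2}(c_1)\approx -B$, items mutually indifferent) admits the grand coalition as a PO+IR partition regardless of whether the \textsc{Partition} instance is a yes-instance: each collector gets utility $2B-B=B\ge 0$, items get~$0$, and no rearrangement can strictly help anyone without hurting a collector. You anticipated exactly this failure mode, but invoking ``an additional witness player'' without specifying its valuations does not discharge the obligation. The difficulty is not just to block the grand coalition but to ensure that \emph{every} PO+IR partition encodes a balanced split; your sketch gives no mechanism that converts the one-sided IR inequality into the two-sided equality $\sum_{i\in S}a_i=B$.

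The paper resolves this with a different device. It reduces from \textsc{MaximalSubsetSumZero} (find a maximum-cardinality subset of given integers summing to zero) using three special players $x,y_1,y_2$ and item players $z_i$. The key idea---absent from your sketch---is that $y_1$ and $y_2$ value each item \emph{oppositely}: $v_{y_1}(z_i)=a_i$ while $v_{y_2}(z_i)=-a_i$. If $y_1$ and $y_2$ sit together with item set $S$, IR for $y_1$ gives $\sum_{i\in S}a_i\ge 0$ and IR for $y_2$ gives $\sum_{i\in S}a_i\le 0$, pinning the sum to exactly zero. The player $x$ values $y_1,y_2$ by $k+1$ and each item by $+1$; PO then forces $y_1,y_2$ into $x$'s coalition and drives $x$ to absorb as many items as possible subject to the zero-sum constraint, i.e., to exhibit a maximum zero-sum subset. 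The ``opposing valuations'' trick is precisely what turns a one-sided IR threshold into an equality; an analogue of it is the missing ingredient in your construction.
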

\begin{proof}
	
	Consider the decision problem {\sc SubsetSumZero} in which an instance consists of a set of $k$ integer weights $A=\{a_1, \ldots, a_k \}$ 
	\eat{such that $\sum_{a_i\in A}a_i=W$} 
	and the question is whether there exists a non-empty $S\subseteq A$ such that $\sum_{s\in S}s=0$?
	Since {\sc SubsetSum} for positive integers is NP-complete, it follows that {\sc SubsetSumZero} is also NP-complete.\footnote{We note that in any instance of {\sc SubsetSum} all zeros in the set $A$ can be omitted to obtain an equivalent problem. Reduce {\sc SubsetSum} to {\sc SubsetSumZero} by adding $a_{k+1}=-W$ to $A$.} Therefore, {\sc MaximalSubsetSumZero}, the problem of finding a maximal cardinality subset $S\subseteq A$ such that $\sum_{s\in S}s=0$ is NP-hard. 
	
	We prove the theorem by a reduction from {\sc MaximalSubsetSumZero}. Reduce an instance of $I$ of {\sc MaximalSubsetSumZero} to an instance $I'=(N,\pref)$ where $(N,\pref)$ is an \ASHG defined in the following way:
	$N=\{x,y_1,y_2\}\cup Z$ where $Z=\{z_i\mid i\in \{1,\ldots, k\}$; $v_x(y_1)=v_x(y_2)=k+1$; $v_x(z_i)=1$ for all $i\in \{1,\ldots,k \}$; $v_{y_1}(z_i)=-v_{z_i}(y_1)=-v_{y_2}(z_i)=v_{z_i}(y_2)=a_i$ for all $i\in \{1,\ldots,k\}$; and $v_{a}(b)=0$ for any $a,b\in N$ for which $v_{a}(b)$ is not already defined.
	
	First, we show that in an individually rational partition $\pi$, no player except $x$ gets positive utility, i.e., $u_\pi(b)=0$ for all $b\in N\setminus\{x\}$.
	 Assume that w.l.o.g $y_1$ gets positive utility in $\pi$. This implies there exist a subset $Z'=Z\cap \pi(y_1)$ such that $\sum_{z\in Z'}v_{y_1}(z)>0$. Then there exists $z\in Z'$ such that $v_{y_1}(z)>0$ which means that $v_{z}(y_1)<0$. Due to individual rationality, $y_2\in\pi(z)=\pi(y_1)$. But if $y_1\in\pi(y_2)$, then $u_\pi(y_2)=\sum_{z\in Z'}-v_{y_1}(z)<0$ and $\pi$ is not individually rational.
	
	Assume that there exists a $z_i\in Z$ such that $u_\pi(z_i)>0$. Then without loss of generality $v_{z_i}(y_1)>0$ and due to individual rationality $y_1\in\pi(z_i)$. Again due to individual rationality, $y_1$ needs to be with another $z_j$ such that $v_{y_1}(z_j)>0$. And again due to individual rationality, $z_j$ needs to be with $y_2$. This means, that for each $z_l\in\pi(z_i)\cap Z$, $u_\pi(z_l)=a_l-a_l=0$.

		
	We show that in every Pareto optimal and individually rational partition $\pi$, we have $y_1,y_2\in\pi(x)$.
	For any other partition $\pi'$, in which this does not hold, $u_{\pi'}(x)\leq 2k+1 < 2k+2=u_\pi(x)$.

	Consider an $S\subseteq A$ and let $\pi_z^S$ be any partition of $\{z_i\mid a_i \in A\setminus S\}$.
	\eat{
	The claim is that $\pi=\{\{x,y_1,y_2\}\cup \{z_i\mid a_i\in S\}\}\cup \pi_z^S$ is a Pareto optimal and individually rational partition if and only if $S\subseteq A$ is the maximal subset such that $\sum_{s\in S}s=0$.
	}
	The claim is that $\pi$ is a Pareto optimal and individually rational partition if and only if $\pi$  is of the form $\{\{x,y_1,y_2\}\cup \{z_i\mid a_i\in S\}\}\cup \pi_z^S$ where $S\subseteq A$ is the maximal subset such that $\sum_{s\in S}s=0$.	
\eat{	Assume that $S\subseteq A$ is the maximal subset such that $\sum_{s\in S}s=0$. We show that such $\pi$ is Pareto optimal and individually rational. Assume that there exists a partition $\pi'$ that Pareto dominates $\pi$. In order to Pareto dominate $\pi$ there must be at least one $b\in N$ for which $u_{\pi'}(b)>u_\pi(b)\geq 0$. This also implies that $\pi'$ has to be individually rational. Therefore $b=z_i$ for some $i\in\{1,\ldots,k\}$ is not possible as $u_\pi(z_i)=0$ for all $i\in\{1,\ldots,k\}$ in every individual rational partition. If $b=x$, by maximality of $S$, we know that $u_{\pi'}(y_1)\neq 0$. Then there must be a $y\in\{y_1,y_2\}$ such that $u_{\pi'}(y)<0=u_\pi(y)$. If without loss of generality $b=y_1$, we have again $u_{\pi'}(y_2)=-u_{\pi'}(y_1)<0$. 
}	
	Assume that $S\subseteq A$ is not a maximal subset such that $\sum_{s\in S}s=0$. If $\sum_{s\in S}s\neq 0$, there exists a $y\in\{y_1,y_2\}$ such that $u_\pi(y)<0$. If $S$ is not maximal then there is a larger set $S'$ and a corresponding partition $\pi'=\{\{x,y_1,y_2\}\cup \{z_i\mid a_i\in S'\}\}\cup \pi_z^{S'}$ with $u_\pi(x)=|S|<|S'|=u_{\pi'}(x)$ and $u_\pi(b)=u_{\pi'}(b)$ for all $b\in N\setminus \{x\}$.
	For any other $S'\subseteq A$ such that $|S'|>|S|$, we know that $\sum_{s'\in S'}\leq 0$ which implies that there is a $y\in\{y_1,y_2\}$ which gets negative utility.
\end{proof}

	\eat{
$N=\{x,y_1,y_2\}\cup\{z_i\mid i\in \{1,\ldots, k\}\}$; $v_x(y_1)=v_x(y_2)=A+1$; $v_{y_1}(y_2)=v_{y_2}(y_1)=A+1$; $v_x(z_i)=A/k$ for all $i\in \{1,\ldots, k\}$; $v_{y_1}(z_i)=a_i$ for all $i\in \{1,\ldots, k\}$; $v_{y_1}(z_i)=-a_i$ for all $i\in \{1,\ldots, k\}$; and $v_{a}(b)=0$ for any $a,b\in N$ for which $v_{a}(b)$ is not already defined.
\eat{
	\begin{itemize}
	\item $N=\{x,y_1,y_2\}\cup\{z_i\mid i\in \{1,\ldots, k\}\}$;
	\item $v_x(y_1)=v_x(y_2)=A+1$; 
	\item $v_{y_1}(y_2)=v_{y_2}(y_1)=A+1$;
	\item $v_x(z_i)=A/k$ for all $i\in \{1,\ldots, k\}$;
	\item $v_{y_1}(z_i)=a_i$ for all $i\in \{1,\ldots, k\}$;
	\item $v_{y_1}(z_i)=-a_i$ for all $i\in \{1,\ldots, k\}$;
	\item $v_{a}(b)=0$ for any $a,b\in N$ for which $v_{a}(b)$ is not already defined.
	\end{itemize}
	}

Consider an $S\subseteq A$ and let $\pi_z^S$ be any partition of $\{z_i\mid a_i \in A\setminus S\}$.
The claim is that $\pi=\{\{x,y_1,y_2\}\cup \{z_i\mid a_i\in S\}\}\cup \pi_z^S$ is a Pareto optimal and individually rational partition if and only if $S\subseteq A$ is the maximal subset such that $\sum_{s\in S}s=0$.

Assume that $S\subseteq A$ is the maximal subset such that $\sum_{s\in S}s=0$. It is clear that in Pareto optimal partition $\pi$, $y_1, y_2 \in \pi(x)$. Assume there exists a partition $\pi'$ such that this is not the case. Then $x$, $y_1$ and $y_1$ increase their utility by being with each other even if they loses out on some positive utility due to being with $z_i$s. The utilities of $z_i$s are not affected since $v_{z_i}(a)=0$ for all $a\in N$ and for all $i\in \{1,\ldots, k\}$. Player $x$ would prefer as many $z_i$s in his coalition but this may lead to the utility of $y_1$ or $y_2$ to be less than zero thereby making the partition not individually rational. The partition can only be individually rational if players $\{z_i\mid a_i\in S\}$ are added to $\{x,y_1,y_2\}$ such that $\sum_{s\in S}s=0$. Therefore $\pi$ is Pareto optimal and individually rational.

Now assume that $S\subseteq A$ is the not maximal subset such that $\sum_{s\in S}s=0$. Then one of the following two is true 1. $S$ is such that $\sum_{s\in S}s\neq 0$ or 2. $S$ is such that $\sum_{s\in S}s=0$ but there exists an $S'$ such that $|S'|>|S|$ and $\sum_{s\in S'}s=0$. In this first we know that $\pi$ is not individually rational since either $y_1$ or $y_2$ gets negative utility. In the second case we know that $\pi=\{\{x,y_1,y_2\}\cup \{z_i\mid a_i\in S'\}\}\cup \pi_z^{S'}$ Pareto dominates $\pi$ and $\pi$ is individually rational but not Pareto optimal. 

Therefore {\sc MaximalSubsetSumZero} reduces to computing an individually rational and Pareto optimal partition in $(N,\pref)$.	
}

\eat{
Another natural algorithmic question is to check whether it is possible for all players to attain their maximum possible utility at the same time. We observe that this problem can be solved in polynomial time for any separable game. We will omit the details of the algorithm but the general idea behind the algorithm is to build up coalitions and 
ensure that a player $i$ and $F(i)$, all the player $i$ likes are in the same coalition. While ensuring this, if there is a player $j$ and a player $j'\in E(j)$ (disliked by $j$), then return `no.' 
}

\eat{
\begin{algorithm}[H]
  \caption{PD}
  \label{alg-PD}
  \textbf{Input:} \ASHG\\
  \textbf{Output:} A partition for which the utility vector is Pareto dominant if such a partition exists and  NO if such a partition does not exist.

  \begin{algorithmic}[1] 
  
  \STATE $\mathsf{RemainingPlayers}\leftarrow N$; $\mathsf{index}\leftarrow 1$
  \WHILE{$\mathsf{RemainingPlayers}\neq \emptyset$}\label{while-step}
\STATE Take a player $i\in \mathsf{RemainingPlayers}$; 

\IF{$F(i)$ intersects with at least one of $ S_1,\ldots S_{\mathsf{index}-1}$}
\STATE Add all the coalitions that intersect with $F(i)$ to the first such coalition $S_{\mathsf{first}}$; Add $F(i)\cup\{i\}$ to the coalition
\IF{there is a player $k\in S_{\mathsf{first}}$ such that $E(k)\cap S_{j}\neq \emptyset$}
\RETURN NO
\ENDIF
\STATE Adjust $\mathsf{index}$ accordingly if $F(i)$ intersected with at least two of $ S_1,\ldots S_{\mathsf{index}-1}$

\ELSE
\STATE Create a coalition $S_{\mathsf{index}}\leftarrow F(i)\cup \{i\}$; $\mathsf{RemainingPlayers}\leftarrow \mathsf{RemainingPlayers}\setminus S_{\mathsf{index}}$
\ENDIF

\WHILE{there is a player in $S_{\mathsf{index}}$ such that $F(j)\nsubseteq S_{\mathsf{index}}$}
\STATE $S_{\mathsf{index}}\leftarrow S_{\mathsf{index}}\cup F(j)$; $\mathsf{RemainingPlayers}\leftarrow \mathsf{RemainingPlayers}\setminus  F(j)$
\IF{there is a player $k\in S_{\mathsf{index}}$ such that $E(k)\in S_{\mathsf{index}}$}
\RETURN NO
\ENDIF

\ENDWHILE
\STATE $\mathsf{index}\leftarrow \mathsf{index}+1$
\ENDWHILE
\RETURN $\{S_1,\ldots, S_{\mathsf{index}}\}$

 \end{algorithmic}
\end{algorithm}

\begin{theorem}
	For separable hedonic games, PD can be solved in polynomial time.
\end{theorem}
\begin{proof}
	The general idea behind Algorithm~\ref{alg-PD} is to ensure that a player $i$ and $F(i)$, all the player $i$ likes are in the same coalition. We will denote by $E(i)=\{j \mid v_i(j)< 0\}$  the set of players, $i$ is strictly dislikes. 
	While ensuring this, if there is a player $j$ and a player $j'\in E(j)$ (disliked by $j$), then return `no.' The algorithm does not assume additive separability and works for any separable game.
\end{proof}
}

\section{Complexity of envy-freeness}~\label{sec:envy}

Envy-freeness is a desirable property in resource allocation, especially in \emph{cake cutting} settings. \citet{LMMS04a} proposed envy-minimization in different ways and examined the complexity of minimizing envy in resource allocation settings. \citet{BoJa02a} mentioned envy-freeness in hedonic games but focused on stability. We already know that envy-freeness can be easily achieved by the partition of singletons.\footnote{The partition of singletons also satisfies individual rationality.} Therefore, in conjunction with envy-freeness, we seek to satisfy other properties such as stability or Pareto optimality. A partition is \emph{Nash stable} if there is no incentive for a player to be deviate to another (possibly empty) coalition.
For symmetric \ASHGS, it is known that Nash stable partitions always exist and they correspond to partitions for which the utilitarian social welfare is a local optimum~\citep[see, \eg][]{BoJa02a}. We now show that for symmetric \ASHGS, there may not exist any partition which is both envy-free and Nash stable. 

\begin{example}\label{example:ns-envy}
Consider an \ASHG $(N,\pref)$ where $N=\{1,2,3\}$ and $\pref$ is defined as follows: $v_1(2)=v_2(1)=3$, $v_1(3)=v_3(1)=3$ and $v_2(3)=v_3(2)=-7$. Then there exists no partition which is both envy-free and Nash stable. 
\end{example}

We use the game in Example~\ref{example:ns-envy} as a gadget to prove the following.\footnote{Example~\ref{example:ns-envy} and the proof of Theorem~\ref{th:envy-nash} also apply to the combination of envy-freeness and individual stability where individual stability is a variant of Nash stability~\citep{BoJa02a}.} 

\begin{theorem}\label{th:envy-nash}
For symmetric preferences, checking whether there exists a partition which is both envy-free and Nash stable is NP-complete in the strong sense. 
\end{theorem}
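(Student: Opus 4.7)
Plan: The first step is membership in NP. A partition $\pi$ is a polynomial-size certificate. Given $\pi$, Nash stability is checked by iterating over each player $i$ and each coalition $C\in\pi\cup\{\emptyset\}$ different from $\pi(i)$ and comparing utilities, and envy-freeness is checked by comparing, for each ordered pair $(i,j)$, the utility of $i$ in $\pi(i)$ with the utility of $i$ in $(\pi(j)\setminus\{j\})\cup\{i\}$; both verifications are polynomial.

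For hardness, the plan is to reduce from a strongly NP-hard combinatorial problem --- a natural candidate, in line with the rest of the paper, is E3C --- using copies of the triangle from Example~\ref{example:ns-envy} as a gadget. The essential property of that gadget is that the hub (player $1$) is equally attractive to the two leaves ($2$ and $3$), who strongly repel each other, so that the gadget in isolation admits no partition that is simultaneously Nash stable and envy-free: some leaf is always either envious of the other or has a profitable deviation towards the hub. Hardness will come from giving each leaf an \emph{outside option} via newly introduced ``element'' players that are shared across gadgets; only if the source instance is solvable does there exist a globally consistent assignment of leaves to outside options that resolves every gadget at once.

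The key steps are then: (i)~construct, from an E3C instance $(R,S)$, an \ASHG with one triangle gadget per set $s\in S$ plus one element player $z^r$ per $r\in R$, assigning symmetric utilities so that each leaf of the gadget for $s$ has a moderately positive value only for the $z^r$ with $r\in s$ and strongly negative values elsewhere, while the gadget's internal weights (hub--leaf positive, leaf--leaf very negative) mirror those of Example~\ref{example:ns-envy}; (ii)~show that an exact cover $S'\subseteq S$ yields an envy-free Nash stable partition in which each element $z^r$ joins the unique leaf assigned to the set $s\in S'$ covering it, with the remaining gadgets resolving themselves with their hub; and (iii)~conversely, show by case analysis that any envy-free Nash stable partition must, in each gadget, push one leaf out of the gadget onto element players in a way that induces a valid exact cover, since otherwise either the local triangle obstruction of Example~\ref{example:ns-envy} reappears or some element player envies another.

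The main obstacle will be calibrating the valuations to make both directions go through simultaneously: the hub--leaf weight, the leaf--leaf repulsion, and the leaf--element weight must be tuned so that (a)~every gadget is genuinely forced into the triangle conflict of Example~\ref{example:ns-envy} unless a leaf escapes to its prescribed element players, (b)~escape to \emph{unintended} element players (or sharing an element among two gadgets) creates either a Nash-improving move or envy, and (c)~symmetry is preserved throughout and all weights remain polynomially bounded, which is necessary for strong NP-hardness. The delicate part is (b): because element players are shared across gadgets, one must rule out partitions in which several leaves cluster on the same element players or in which a hub defects into an element cluster, and this is where the bulk of the case analysis will live.
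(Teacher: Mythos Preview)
Your NP-membership argument is fine and matches the paper's. The hardness reduction, however, attaches the triangle gadgets to the wrong side of the E3C instance, and this breaks the ``yes'' direction.

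You place one triangle (hub plus two mutually repelling leaves, as in Example~\ref{example:ns-envy}) on each \emph{set} $s\in S$, with the leaves' outside option being the element players $z^r$ for $r\in s$. In a yes-instance with cover $S'$, your intended partition sends one leaf of each cover-gadget out to its three element players and pairs the remaining leaf with the hub. But for every $s\in S\setminus S'$ --- and there will typically be many such $s$ --- the three gadget players are left with no available outside option: every element player is already grouped with a cover-gadget leaf, and any non-cover leaf joining such a group would incur the large negative leaf--leaf penalty. Each non-cover gadget is therefore isolated and reduces exactly to the configuration of Example~\ref{example:ns-envy}, which by construction admits \emph{no} envy-free Nash stable arrangement. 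Your phrase ``the remaining gadgets resolving themselves with their hub'' names precisely the step that cannot be carried out. No calibration of the weights rescues this: if the leaf--element weight $w$ satisfies $3w\le 3$ (so that a cover-gadget's staying leaf does not envy its escaping sibling), then a non-cover leaf sitting alone still envies its sibling paired with the hub; if $w$ is raised so that a single element player suffices as an outside option, the cover-gadget's staying leaf becomes envious of the one that escaped with three.

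The paper's reduction inverts the roles: it places one triangle on each \emph{element} $r\in R$ (players $z_1^r,z_2^r,z_3^r$) and introduces one player $y^s$ per set. An exact cover touches \emph{all} elements but only \emph{some} sets, so indexing the conflict gadget by elements guarantees that in a yes-instance every triangle's hub $z_1^r$ escapes into a coalition $\{y^s,z_1^i,z_1^j,z_1^k\}$ with $s=\{i,j,k\}\in S'$, leaving the two leaves $z_2^r,z_3^r$ as harmless singletons; unused set-players $y^s$ for $s\notin S'$ are also singletons and create no conflict. Reorienting your construction in this way --- gadgets on elements, outside options provided by set-players --- is the missing idea.
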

\begin{proof}

The problem is clearly in NP since envy-freeness and Nash stability can be verified in polynomial time. We reduce the problem from E3C. Let $(R,S)$ be an instance of E3C where $R$ is a set and $S$ is a collection of subsets of $R$ such that $|R|=3m$ for some positive integer $m$ and $|s| = 3$ for each $s \in S$. We will use the fact that E3C remains NP-complete even if each $r\in R$ occurs in at most three members of $S$. 
$(R,S)$ can be reduced to an instance $(N,\pref)$ where $(N,\pref)$ is an \ASHG defined in the following way. Let $N=\{y^s \mid  s \in S\} \cup \{z_{1}^r, z_{2}^r, z_{3}^r \mid  r\in R \}$. We set all preferences as symmetric. The players preferences are as follows: for all $r\in R$, $v_{z_1^r}(z_2^r)=v_{z_2^r}(z_1^r)=3$, $v_{z_1^r}(z_3^r)=3$ and $v_{z_2^r}(z_3^r)=v_{z_3^r}(z_2^r)=-7$; for all $s=\{i,j,k\}\in S$, $v_{z_1^i}(z_1^j)=v_{z_1^i}(z_1^k)=v_{z_1^j}(z_1^k)=1/10$ and $v_{y^s}(z_1^i)=v_{y^s}(z_1^j)=v_{y^s}(z_1^k)=28/10$; 
and for all $a,b\in N$ for which valuations have not been defined, $v_a(b)=v_b(a)=-7$

\eat{
	\begin{itemize}
	\item For all $r\in R$, $v_{z_1^r}(z_2^r)=v_{z_2^r}(z_1^r)=3$, $v_{z_1^r}(z_3^r)=3$ and $v_{z_2^r}(z_3^r)=v_{z_3^r}(z_2^r)=-7$.
	\item For all $s=\{i,j,k\}\in S$, $v_{z_1^i}(z_1^j)=v_{z_1^i}(z_1^k)=v_{z_1^j}(z_1^k)=v_{y^s}(z_1^i)=v_{y^s}(z_1^j)=v_{y^s}(z_1^k)=1$.
	\item For all $a,b\in N$ for which valuations have not been defined, $v_a(b)=v_b(a)=-7$
    \end{itemize}
}
We note that $-7$ is a sufficiently large negative valuation to ensure that if $v_a(b)=v_b(a)=-7$, then $a$ and $b$ will get negative utility if they are in the same coalition. We show that there exists an envy-free and Nash stable partition for $(N,\pref)$ if and only if $(R,S)$ is a `yes' instance of E3C.

Assume that there exists $S'\subseteq S$ such that $S'$ is a partition of $R$. Then there exists a partition $\pi=$
$\{\{y^s, z_1^i, z_1^j, z_1^k\}\mid s=\{i,j,k\}\in S' \} \cup \{\{z_2^r\}, \{z_3^r\}\mid r\in R\}\cup \{\{s\}\mid s\in S\setminus S'\}$. It is easy to see that partition $\pi$ is Nash stable and envy-free. Players $z_1^r$ and $z_3^r$ both had an incentive to be with each other when they are singletons. However, each $z_1^r$ now gets utility 3 by being in a coalition with $z_1^{r'}$, $z_1^{r''}$ and $y^s$ where $s=\{r,r',r''\}\in S$. Therefore $z_1^r$ has no incentive to be with $z_3^r$ and $z_3^r$ has no incentive to join $\{z_1^{r'},z_1^{r'},z_1^{r''}, y^s\}$ because $v_{z_3^r}(z_1^{r'})=v_{z_3^r}(z_1^{r''})=v_{z_3^r}(y^s)=-7$. Similarly, no player is envious of another player.

Assume that there exists no partition $S'\subseteq S$ of $R$ such that $S'$ is a partition of $R$. Then, there exists at least one $r\in R$ such that $z_1^i$ is not in the coalition of the form $\{z_1^{r},z_1^{r'},z_1^{r''}, y^s\}$ where $s=\{r,r',r''\}\in S$. Then the only individually rational coalitions which $z_1^r$ can form and get utility at least $3$ are the following $\{z_1^r, z_3^r\}$, $\{z_1^r, z_2^r\}$. In the first case, $z_1^r$ wants to deviate to $\{z_3^r\}$. In the second case, $z_2^r$ is envious and wants to replace  $z_3^{r}$. Therefore, there exists no partition which is both Nash stable and envy-free.
\end{proof}

While the existence of a Pareto optimal partition and an envy-free partition is guaranteed, we show that checking whether there exists a partition which is both envy-free and Pareto optimal is hard.

\begin{theorem}
	Checking whether there exists a partition which is both Pareto optimal and envy-free is $\Sigma_{2}^{p}$-complete.
	\label{thm:envy-po-hard}
\end{theorem}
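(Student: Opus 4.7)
The plan is first to establish containment in $\Sigma_{2}^{p}$ via a natural guess-and-verify characterisation, and then to prove hardness by reducing from a canonical $\Sigma_{2}^{p}$-complete problem.

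For membership, I would guess a partition $\pi$ nondeterministically. Envy-freeness of $\pi$ is verifiable in polynomial time: for each ordered pair of players $(i,j)$, compute the utility $i$ would receive upon replacing $j$ in $\pi(j)$ and compare it with $u_\pi(i)$. Pareto optimality of $\pi$ is a coNP predicate, because its negation -- the existence of a Pareto dominating $\pi'$ -- is an NP query. The problem therefore lies in $\mathrm{NP}^{\mathrm{coNP}} = \Sigma_{2}^{p}$.

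For hardness, I would reduce from $\exists\forall$-SAT: given a formula $\exists\vec{x}\,\forall\vec{y}\,\phi(\vec{x},\vec{y})$, build an \ASHG consisting of two loosely coupled components. A \emph{choice} component has, for each existential variable $x_i$, a small symmetric gadget whose only envy-free configurations correspond to the two truth values of $x_i$; the existential guess is thus encoded by which envy-free configuration is selected. A \emph{verification} component adapts the E3C-style construction from the proof of Theorem~\ref{prop:check-PO-symm-hard} so that, once $\vec{x}$ is fixed, the partitions that could Pareto-dominate the current candidate correspond exactly to witnesses $\vec{y}$ falsifying $\phi(\vec{x},\vec{y})$. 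Using heavily negative weights (as with the $-7$ edges in Theorem~\ref{prop:check-PO-symm-hard}) between the two components prevents beneficial cross-component deviations, so an envy-free and Pareto optimal partition exists iff there is an assignment $\vec{x}$ for which $\forall\vec{y}\,\phi(\vec{x},\vec{y})$ holds.

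The main obstacle is designing the choice gadget so that envy-freeness genuinely forces a binary commitment and is not trivially satisfied by, say, the partition of singletons, while simultaneously keeping the gadget internally Pareto optimal so that all Pareto-improving deviations must act on the verification component. Example~\ref{example:ns-envy} is suggestive here: a small symmetric structure in which the set of envy-free partitions is tightly constrained. Once the choice gadget is in place, the remaining work -- arranging the E3C construction so that Pareto-improving moves correspond exactly to assignments $\vec{y}$ that falsify $\phi$ -- is a direct extension of the argument behind Theorem~\ref{prop:check-PO-symm-hard}, together with careful book-keeping to show that no Pareto improvement introduces envy and no envy-reducing move creates a Pareto improvement.
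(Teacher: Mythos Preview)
Your membership argument is correct and coincides with the paper's.

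For hardness, the paper takes a much shorter route than the one you sketch. It reduces from the problem $\exists$-EEF-ADD of deciding whether an additive indivisible-goods allocation instance admits an allocation that is simultaneously envy-free and Pareto optimal, which was shown $\Sigma_2^p$-complete by \citet{KBKZ09a}. Given $(I,X,w)$, one builds an \ASHG on $N=I\cup X$ where $v_i(x)=w(i,x)$, objects are indifferent to everyone, and any two agents in $I$ assign each other a large negative value. This last clause forces every Pareto optimal partition to place the agents of $I$ in distinct coalitions, so such partitions correspond bijectively to allocations, and envy-freeness and Pareto optimality transfer essentially verbatim between the two settings. All the $\Sigma_2^p$ machinery is thus inherited from the source problem rather than rebuilt.

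Your direct-from-$\exists\forall$-SAT plan is not wrong in spirit, but what you have written is an outline with its central obstacle explicitly left open. Two concrete gaps: first, the all-singletons partition is always envy-free, so a choice gadget that ``forces a binary commitment'' via envy-freeness alone cannot work; you need the interaction with Pareto optimality to rule out singletons, and you have not specified how. Second, the verification side is not a ``direct extension'' of Theorem~\ref{prop:check-PO-symm-hard}: that construction encodes E3C, a fixed combinatorial structure, whereas you need Pareto-improving partitions to correspond to \emph{arbitrary} assignments $\vec{y}$ falsifying an \emph{arbitrary} formula $\phi(\vec{x},\vec{y})$. Bridging that gap is a substantial new construction, not book-keeping. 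The paper's reduction avoids both issues by starting from a problem that already has the right shape.
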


\begin{proof}
    The problem has a `yes' instance if there exists an envy-free partition that Pareto dominates every other partition. Therefore the problem is in the complexity class ${\mathrm{NP}}^{\mathrm{coNP}}=\Sigma_{2}^{p}$.

    We prove hardness by a reduction from a problem concerning resource allocation (with additive utilities) \citep{KBKZ09a}. A resource allocation problem is a tuple $(I,X,w)$ where $I$ is a set of agents, $X$ is a set of indivisible objects and $w:I\times X \rightarrow \mathbb{R}$ is a weight function. An $a:I\rightarrow 2^X$ is an allocation if for all $i,j\in I$ such that $i\neq j$, we have $a(i)\cap a(j)=\emptyset$. The resultant utility of each agent $i\in I$ is then $\sum_{x\in a(i)} w(i,x)$.
It was shown by \citet{KBKZ09a} that the problem $\exists$-EEF-ADD of checking the existence of an envy-free and Pareto optimal allocation is $\Sigma_{2}^{p}$-complete.

Now, consider an instance $(I,X,w)$ of $\exists$-EEF-ADD and reduce it to an instance $(N,\pref)$ of an \ASHG where $N=I\cup X$ and $\pref$ is specified by the following values: $v_i(x_j)=w(i,x_j)$ and $v_{x_j}(i)=0$ for all $i\in I$, $x_j \in X$; $v_{x_k}(x_j)=v_{x_j}(x_k)=0$ for all $x_j,x_k$; and $v_i(j)=v_j(i)=-W\cdot|I\cup X|$  for all $i,j\in I$ where $W=\sum_{i\in I,x_j\in X}|w(i,x_j)|$.
\eat{
	\begin{itemize}
	\item For all $i\in I$, $x_j \in X$, $v_i(x_j)=w(i,x_j)$ and $v_{x_j}(i)=0$.
	\item For all $x_j,x_k$, $v_{x_i}(x_j)=v_{x_j}(x_i)=0$.
	\item For all $i,j\in I$, $v_i(j)=v_j(i)=-W|I\cup X|$.
	\end{itemize}
	}
	It can then be shown that there exists a Pareto optimal and envy-free partition in $(N,\pref)$ if and only if $(I,X,w)$ is a `yes' instance of $\exists$-EEF-ADD. The proof is omitted due to space limitations.\end{proof}
	
\eat{It is clear that for any Pareto optimal partition $\pi$, there exist no $i,j\in I \subset N$ such that $i\neq j$ and $j\in \pi(i)$. Assume that this were not the case and there exist $i,j\in I \subset N$ such that $i\neq j$ and $j\in \pi(i)$. Then $i$ and $j$ both get negative value because $\sum_{k\in \pi(i)}v_i(k)=\sum_{k\in (\pi(i)\setminus \{j\})}v_i(k)-W<0$ and 
	$\sum_{k\in \pi(i)}v_j(k)=\sum_{k\in (\pi(i)\setminus \{i\})}v_j(k)-W<0$. 
	Then $i$ and $j$ can be separated to form singletons to get another partition $\pi'$, where the value of every other player $k\in (N\setminus \{i,j\})$ gets the same value while $i$ and $j$ get at least zero value. Therefore there is a one-to-one correspondence between any such partition $\pi$ and allocation $a$ where $a(i)=\pi(i)\setminus \{i\}$. It now easy to see that $\pi$ is Pareto optimal and envy-free in $G$ if and only if $a$ is a Pareto optimal and envy-free allocation.}

The results of this section show that, even though envy-freeness can be trivially satisfied on its own, it becomes much more delicate when considered in conjunction with other desirable properties.

\section{Conclusions}

We studied the complexity of partitions that satisfy standard criteria of fairness and optimality in additively separable hedonic games. 
We showed that computing a partition with maximum egalitarian or utilitarian social welfare is NP-hard in the strong sense and computing an individually rational and Pareto optimal partition is weakly NP-hard. A Pareto optimal partition can be computed in polynomial time when preferences are strict. Interestingly, checking whether a given partition is Pareto optimal is coNP-complete even in the restricted setting of strict and symmetric preferences.

We also showed that checking the existence of partition which satisfies not only envy-freeness but an additional property like Nash stability or Pareto optimality is computationally hard. The complexity of computing a Pareto optimal partition for \ASHGS with general preferences is still open. 
Other directions for future research include approximation algorithms to compute maximum utilitarian or egalitarian social welfare for different representations of hedonic games.




\begin{contact}
Haris Aziz, Felix Brandt, and Hans Georg Seedig\\
Department of Informatics\\ 
Technische Universit\"at M\"unchen\\
85748 Garching bei M\"unchen, Germany\\
\texttt{\small\{aziz,brandtf,seedigh\}@in.tum.de}
\end{contact}

\end{document}